\documentclass[review]{elsarticle}

\usepackage{lineno,hyperref}
\usepackage{graphicx,amsmath,amssymb,amsthm,bm}
\modulolinenumbers[5]

\journal{}









\bibliographystyle{elsarticle-num}

\DeclareMathOperator{\supp}{supp}

\newtheorem{theorem}{Theorem}
\newtheorem{proposition}{Proposition}

\begin{document}

\begin{frontmatter}

\title{Common knowledge equilibrium of Boolean securities in distributed information market}

\author{Masahiko Ueda\corref{mycorrespondingauthor}}
\address{Department of Systems Science, Graduate School of Informatics, Kyoto University, Kyoto 606-8501, Japan}
\ead{ueda.masahiko.5r@kyoto-u.ac.jp}

\begin{abstract}
We investigate common knowledge equilibrium of separable (or parity) and totally symmetric Boolean securities in distributed information market.
We theoretically show that clearing price converges to the true value when a common prior probability distribution of information of each player satisfies some conditions.
\end{abstract}

\begin{keyword}
Distributed information market; Common knowledge equilibrium
\end{keyword}

\end{frontmatter}


\section{Introduction}
\label{sec:introduction}
Common knowledge is that all agents know, that all agents know that all agents know, and so on ad infinitum \cite{FudTir1991,OsbRub1994}.
The concept of common knowledge was first introduced mathematically by Aumann in 1976 \cite{Aum1976}.
By using the concept, he proved that if all agents have the same prior distribution, and their posteriors for an event are common knowledge, then these posteriors are equal.
After his pioneering work, Geanakoplos and Polemarchakis proved that even though posteriors of agents are initially different, iterative announcement and revision process of posteriors leads to the state where posteriors of all agents are equal to each other in finite steps \cite{GeaPol1982}.
Furthermore, McKelvey and Page extended their results to the situation where not posteriors themselves but some statistics of posteriors are announced \cite{McKPag1986}.
They showed that if statistics of posteriors satisfies some condition, convergence to common knowledge occurs.
In addition, Nielsen et al. extended the results of Ref. \cite{McKPag1986} for conditional probability to conditional expectations \cite{NBGMP1990}.
Because, in many economic settings, it is more natural to suppose that only some aggregate of individual information (such as price) is announced instead of posteriors themselves, the results of Refs. \cite{McKPag1986,NBGMP1990} are useful in more realistic situation.

It has been considered that markets have power to compute the payoffs of securities \cite{PLGN2001}.
Feigenbaum et al. proposed a simple model of market where the payoff of some security is computed from information distributed in players through trades \cite{FFPS2005}.
They showed that the equilibrium of this model is described by the concept of common knowledge, and the necessary and sufficient condition for the market to correctly compute the Boolean payoff for all priors is that the payoff is described by a weighted threshold function.
This model was further investigated in Ref. \cite{CMC2006}, where effect of aggregate uncertainty was studied.
However, computational power of even such simple model has not been completely known.
For example, computational power of this model when convergence for all priors is not required has not been known.

In this paper, we investigate computational power of the distributed information market model \cite{FFPS2005} for two classes of Boolean securities, that is, separable (or parity) and totally symmetric, both of which are not necessarily the form of weighted threshold function.
We prove that the payoff of such securities can be correctly computed by market when prior is described by some form.

The paper is organized as follows.
In section \ref{sec:model}, we introduce a model of distributed information market.
In section \ref{sec:previous}, we review previous results for this model.
In section \ref{sec:separable}, we introduce the concept of separable (or parity) securities and prove that clearing price of these securities converges to the true value when a common prior probability distribution of information of each player is uniformly biased distribution.
In section \ref{sec:symmetric}, we introduce totally symmetric securities and prove that clearing price of these securities converges to the true value when a common prior probability distribution of information of each player is also totally symmetric and satisfies some condition.
Section \ref{sec:discussion} is devoted to concluding remarks.

\section{Model}
\label{sec:model}
We consider distributed information market model \cite{FFPS2005}.
A set of players is described as $\left\{ 1, \cdots, N \right\}$.
We assume that each player has one bit of information about the true state of the world (private information).
Private information of player $i$ is described as $\sigma_i \in \left\{ 1, -1 \right\}$ (not $\left\{ 0, 1 \right\}$, for convenience).
We also assume that payoff of traded security is a Boolean function.
For convenience, we write the payoff of security as $g(\bm{\sigma}) \in \left\{ 1, -1 \right\}$, where we have defined $\bm{\sigma}\equiv \left( \sigma_1, \cdots, \sigma_N \right)$.
(The original payoff is obtained as $(1+g)/2$.)
The functional form of $g(\bm{\sigma})$ is assumed to be common knowledge among all players.
Furthermore, we assume that all players have a common prior probability distribution $\mathcal{P}(\bm{\sigma})$ over the values of $\bm{\sigma}$.
At time $t\in \mathbb{Z}$, player $i$ bids $b_{i,t}$ according to the expectation of $g(\bm{\sigma})$ conditional on her private information and a set of $\bm{\sigma}$ consistent with previous clearing prices.
(This rule is obtained by the assumption that players are risk-neutral, myopic, and bid truthfully.)
Market-price formulation process is modeled by Shapley-Shubik market game \cite{ShaShu1977} with restriction.
Then, the clearing price at this round is 
\begin{eqnarray}
 c_{t+1} &=& \frac{1}{N} \sum_{i=1}^N b_{i,t}
\end{eqnarray}
and net money gain of player $i$ is $c_{t+1}-b_{i,t}$.
Probability distribution of $\bm{\sigma}$ for each player is updated via Bayes' rule.

Mathematically, the dynamics of this market when the true state of the world is $\bm{\sigma}$ is described as follows:
\begin{eqnarray}
 P_{i,t}\left( \hat{\bm{\sigma}}; \bm{\sigma} \right) &=& \frac{\delta_{\hat{\sigma}_i, \sigma_i} P^\mathrm{(ex)}_t\left( \hat{\bm{\sigma}}; \bm{\sigma} \right)}{\sum_{\hat{\bm{s}}} \delta_{\hat{s}_i, \sigma_i} P^\mathrm{(ex)}_t\left( \hat{\bm{s}}; \bm{\sigma} \right)} \\
 b_{i,t} \left( \bm{\sigma} \right) &=& \sum_{\hat{\bm{\sigma}}} g\left( \hat{\bm{\sigma}} \right) P_{i,t}\left( \hat{\bm{\sigma}}; \bm{\sigma} \right) \\
 c_t \left( \bm{\sigma} \right) &=& \frac{1}{N} \sum_{i=1}^N b_{i, t-1} \left( \bm{\sigma} \right) \\
 P^\mathrm{(ex)}_t\left( \hat{\bm{\sigma}}; \bm{\sigma} \right) &=& \frac{\mathbb{I}\left( c_t \left( \hat{\bm{\sigma}} \right) = c_t \left( \bm{\sigma} \right) \right) P^\mathrm{(ex)}_{t-1}\left( \hat{\bm{\sigma}}; \bm{\sigma} \right)}{\sum_{\hat{\bm{s}}} \mathbb{I}\left( c_t \left( \hat{\bm{s}} \right) = c_t \left( \bm{\sigma} \right) \right) P^\mathrm{(ex)}_{t-1}\left( \hat{\bm{s}}; \bm{\sigma} \right)}. \label{eq:dynamics_Pex}
\end{eqnarray}
with the initial condition
\begin{eqnarray}
 P^\mathrm{(ex)}_0\left( \hat{\bm{\sigma}}; \bm{\sigma} \right) &=& \mathcal{P} \left( \hat{\bm{\sigma}} \right).
\end{eqnarray}
Here we have introduced an indicator function $\mathbb{I}(\cdots)$ that returns $1$ when $\cdots$ holds and $0$ otherwise.
The function $P^\mathrm{(ex)}_t\left( \hat{\bm{\sigma}}; \bm{\sigma} \right)$ corresponds to the probability distribution of the state $\hat{\bm{\sigma}}$ at round $t$ for external observer when the true state of the world is $\bm{\sigma}$.
Similarly, the function $P_{i,t}\left( \hat{\bm{\sigma}}; \bm{\sigma} \right)$ corresponds to the probability distribution of the state $\hat{\bm{\sigma}}$ at round $t$ for player $i$.
$b_{i,t} \left( \bm{\sigma} \right)$ is a bid of player $i$ at round $t$.
$c_t \left( \bm{\sigma} \right)$ is the clearing price at round $t$.
We assume that the support of $\mathcal{P}\left( \hat{\bm{\sigma}} \right)$ contains $\bm{\sigma}$.
Generally, in order to calculate $P^\mathrm{(ex)}_t\left( \hat{\bm{\sigma}}; \bm{\sigma} \right)$, players and external observer need to compute $c_t \left( \bm{\sigma} \right)$ for all $\bm{\sigma} \in \left\{ 1, -1 \right\}^N$ and rule out $\bm{\sigma}$ that is inconsistent with the actual clearing price.

It should be noted that Eq. (\ref{eq:dynamics_Pex}) can be written as
\begin{eqnarray}
 P^\mathrm{(ex)}_t\left( \hat{\bm{\sigma}}; \bm{\sigma} \right) &=& \frac{\left\{ \prod_{t^\prime=1}^t \mathbb{I}\left( c_{t^\prime} \left( \hat{\bm{\sigma}} \right) = c_{t^\prime} \left( \bm{\sigma} \right) \right) \right\} \mathcal{P} \left( \hat{\bm{\sigma}} \right)}{\sum_{\hat{\bm{s}}} \left\{ \prod_{t^\prime=1}^t \mathbb{I}\left( c_{t^\prime} \left( \hat{\bm{s}} \right) = c_{t^\prime} \left( \bm{\sigma} \right) \right) \right\} \mathcal{P} \left( \hat{\bm{s}} \right)}.
 \label{eq:Pex_mod}
\end{eqnarray}
Equivalently, when we define a set
\begin{eqnarray}
 \mathcal{S}^t\left( \bm{\sigma} \right) &\equiv& \left\{ \left. \hat{\bm{\sigma}} \in \left\{ 1, -1 \right\}^N \right| c_t \left( \hat{\bm{\sigma}} \right) = c_t \left( \bm{\sigma} \right), \cdots, c_1 \left( \hat{\bm{\sigma}} \right) = c_1 \left( \bm{\sigma} \right) \right\},
\end{eqnarray}
Eq. (\ref{eq:Pex_mod}) can be rewritten as
\begin{eqnarray}
 P^\mathrm{(ex)}_t\left( \hat{\bm{\sigma}}; \bm{\sigma} \right) &=&  \left\{
\begin{array}{ll}
 \frac{\mathcal{P} \left( \hat{\bm{\sigma}} \right)}{\sum_{\hat{\bm{s}}\in \mathcal{S}^t\left( \bm{\sigma} \right)} \mathcal{P} \left( \hat{\bm{s}} \right)}  &\quad \left( \hat{\bm{\sigma}}\in \mathcal{S}^t\left( \bm{\sigma} \right) \right) \\
 0 &\quad \left( \hat{\bm{\sigma}}\notin \mathcal{S}^t\left( \bm{\sigma} \right) \right).
\end{array}
\right.
\end{eqnarray}
This implies $P^\mathrm{(ex)}_t\left( \cdot; \bm{\sigma} \right) = P^\mathrm{(ex)}_t\left( \cdot; \bm{\sigma}^\prime \right)$ for $\forall \bm{\sigma}^\prime \in \supp P^\mathrm{(ex)}_t\left( \cdot; \bm{\sigma} \right)=\mathcal{S}^t\left( \bm{\sigma} \right)$.

Furthermore, the constraint $c_t \left( \hat{\bm{\sigma}} \right) = c_t \left( \bm{\sigma} \right)$ for each $t$ in Eq. (\ref{eq:dynamics_Pex}) effectively gives a linear equation about $\hat{\bm{\sigma}}$ \cite{FFPS2005}.
In fact, for $\hat{\bm{\sigma}}\in \mathcal{S}^t\left( \bm{\sigma} \right)$ we obtain
\begin{eqnarray}
 b_{i,t} \left( \hat{\bm{\sigma}} \right) \mathbb{I}\left( \hat{\bm{\sigma}}\in \mathcal{S}^t\left( \bm{\sigma} \right) \right) &=& \frac{\sum_{\hat{\bm{s}}\in \mathcal{S}^t\left( \hat{\bm{\sigma}} \right)} g\left( \hat{\bm{s}} \right) \delta_{\hat{s}_i, \hat{\sigma}_i} \mathcal{P} \left( \hat{\bm{s}} \right)}{\sum_{\hat{\bm{s}}\in \mathcal{S}^t\left( \hat{\bm{\sigma}} \right)} \delta_{\hat{s}_i, \hat{\sigma}_i} \mathcal{P} \left( \hat{\bm{s}} \right)} \mathbb{I}\left( \hat{\bm{\sigma}}\in \mathcal{S}^t\left( \bm{\sigma} \right) \right) \nonumber \\
 &=& \frac{\sum_{\hat{\bm{s}}\in \mathcal{S}^t\left( \bm{\sigma} \right)} g\left( \hat{\bm{s}} \right) \delta_{\hat{s}_i, \hat{\sigma}_i} \mathcal{P} \left( \hat{\bm{s}} \right)}{\sum_{\hat{\bm{s}}\in \mathcal{S}^t\left( \bm{\sigma} \right)} \delta_{\hat{s}_i, \hat{\sigma}_i} \mathcal{P} \left( \hat{\bm{s}} \right)} \mathbb{I}\left( \hat{\bm{\sigma}}\in \mathcal{S}^t\left( \bm{\sigma} \right) \right) \nonumber \\
 &=& \left[ \beta_{i,t} \left( \mathcal{S}^t\left( \bm{\sigma} \right) \right) + \gamma_{i,t} \left( \mathcal{S}^t\left( \bm{\sigma} \right) \right) \hat{\sigma}_i \right] \mathbb{I}\left( \hat{\bm{\sigma}}\in \mathcal{S}^t\left( \bm{\sigma} \right) \right) \nonumber \\
 &&
\end{eqnarray}
with
\begin{eqnarray}
 \beta_{i,t} \left( \mathcal{S}^t\left( \bm{\sigma} \right) \right) &\equiv& \frac{1}{2} \left\{ \frac{\sum_{\hat{\bm{s}}\in \mathcal{S}^t\left( \bm{\sigma} \right)} g\left( \hat{\bm{s}} \right) \mathcal{P} \left( \hat{\bm{s}} \right) + \sum_{\hat{\bm{s}}\in \mathcal{S}^t\left( \bm{\sigma} \right)} g\left( \hat{\bm{s}} \right) \hat{s}_i \mathcal{P} \left( \hat{\bm{s}} \right)}{\sum_{\hat{\bm{s}}\in \mathcal{S}^t\left( \bm{\sigma} \right)} \mathcal{P} \left( \hat{\bm{s}} \right) + \sum_{\hat{\bm{s}}\in \mathcal{S}^t\left( \bm{\sigma} \right)} \hat{s}_i \mathcal{P} \left( \hat{\bm{s}} \right)} \right. \nonumber \\
 && \qquad \left. + \frac{\sum_{\hat{\bm{s}}\in \mathcal{S}^t\left( \bm{\sigma} \right)} g\left( \hat{\bm{s}} \right) \mathcal{P} \left( \hat{\bm{s}} \right) - \sum_{\hat{\bm{s}}\in \mathcal{S}^t\left( \bm{\sigma} \right)} g\left( \hat{\bm{s}} \right) \hat{s}_i \mathcal{P} \left( \hat{\bm{s}} \right)}{\sum_{\hat{\bm{s}}\in \mathcal{S}^t\left( \bm{\sigma} \right)} \mathcal{P} \left( \hat{\bm{s}} \right) - \sum_{\hat{\bm{s}}\in \mathcal{S}^t\left( \bm{\sigma} \right)} \hat{s}_i \mathcal{P} \left( \hat{\bm{s}} \right)} \right\} \label{eq:def_beta} \\
 \gamma_{i,t} \left( \mathcal{S}^t\left( \bm{\sigma} \right) \right) &\equiv& \frac{1}{2} \left\{ \frac{\sum_{\hat{\bm{s}}\in \mathcal{S}^t\left( \bm{\sigma} \right)} g\left( \hat{\bm{s}} \right) \mathcal{P} \left( \hat{\bm{s}} \right) + \sum_{\hat{\bm{s}}\in \mathcal{S}^t\left( \bm{\sigma} \right)} g\left( \hat{\bm{s}} \right) \hat{s}_i \mathcal{P} \left( \hat{\bm{s}} \right)}{\sum_{\hat{\bm{s}}\in \mathcal{S}^t\left( \bm{\sigma} \right)} \mathcal{P} \left( \hat{\bm{s}} \right) + \sum_{\hat{\bm{s}}\in \mathcal{S}^t\left( \bm{\sigma} \right)} \hat{s}_i \mathcal{P} \left( \hat{\bm{s}} \right)} \right. \nonumber \\
 && \qquad \left. - \frac{\sum_{\hat{\bm{s}}\in \mathcal{S}^t\left( \bm{\sigma} \right)} g\left( \hat{\bm{s}} \right) \mathcal{P} \left( \hat{\bm{s}} \right) - \sum_{\hat{\bm{s}}\in \mathcal{S}^t\left( \bm{\sigma} \right)} g\left( \hat{\bm{s}} \right) \hat{s}_i \mathcal{P} \left( \hat{\bm{s}} \right)}{\sum_{\hat{\bm{s}}\in \mathcal{S}^t\left( \bm{\sigma} \right)} \mathcal{P} \left( \hat{\bm{s}} \right) - \sum_{\hat{\bm{s}}\in \mathcal{S}^t\left( \bm{\sigma} \right)} \hat{s}_i \mathcal{P} \left( \hat{\bm{s}} \right)} \right\}. \label{eq:def_gamma}
\end{eqnarray}
Then, the constraint $c_{t+1} \left( \hat{\bm{\sigma}} \right) = c_{t+1} \left( \bm{\sigma} \right)$ for $\hat{\bm{\sigma}}\in \mathcal{S}^t\left( \bm{\sigma} \right)$ is equal to
\begin{eqnarray}
 \frac{1}{N} \sum_{i=1}^N \left[ \beta_{i,t} \left( \mathcal{S}^t\left( \bm{\sigma} \right) \right) + \gamma_{i,t} \left( \mathcal{S}^t\left( \bm{\sigma} \right) \right) \hat{\sigma}_i \right] &=& \frac{1}{N} \sum_{i=1}^N \left[ \beta_{i,t} \left( \mathcal{S}^t\left( \bm{\sigma} \right) \right) + \gamma_{i,t} \left( \mathcal{S}^t\left( \bm{\sigma} \right) \right) \sigma_i \right], \nonumber \\
 &&
\end{eqnarray}
which is linear with respect to $\left\{ \hat{\sigma}_i \right\}$.

\section{Previous studies}
\label{sec:previous}
For such market with information aggregation, properties of equilibrium ($t \rightarrow \infty$) have been investigated.
The next theorem is application of general theorem about common knowledge in Refs. \cite{McKPag1986,NBGMP1990} to the distributed information market model.
\begin{theorem}[Nielsen et al. \cite{NBGMP1990}]
\label{prop:CKE}
Suppose that the true state of the world is $\bm{\sigma}$.
At equilibrium, 
\begin{eqnarray}
 \sum_{\hat{\bm{\sigma}}} g\left( \hat{\bm{\sigma}} \right) P_{i,\infty}\left( \hat{\bm{\sigma}}; \bm{\sigma} \right) &=& \sum_{\hat{\bm{\sigma}}} g\left( \hat{\bm{\sigma}} \right) P^\mathrm{(ex)}_\infty\left( \hat{\bm{\sigma}}; \bm{\sigma} \right) = c_\infty \left( \bm{\sigma} \right)
\end{eqnarray}
holds for all $i$ when common prior is consistent (that is, the support of $\mathcal{P}\left( \hat{\bm{\sigma}} \right)$ contains $\bm{\sigma}$).
Furthermore, the convergence occurs in finite steps.
\end{theorem}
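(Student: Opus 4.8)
The plan is to separate the statement into its two assertions---convergence in finitely many steps and the triple equality at equilibrium---and to handle the equilibrium identity through a variance/covariance computation that isolates exactly the content of the cited consensus theorems \cite{McKPag1986,NBGMP1990}.

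First I would establish finite convergence from the monotone structure of the information sets. Since the constraint $c_{t+1}(\hat{\bm{\sigma}})=c_{t+1}(\bm{\sigma})$ is appended to all previous ones, the sets satisfy $\mathcal{S}^{t+1}(\bm{\sigma})\subseteq\mathcal{S}^t(\bm{\sigma})$, and each is nonempty because consistency of the prior gives $\bm{\sigma}\in\mathcal{S}^t(\bm{\sigma})$. A strictly decreasing chain of nonempty subsets of the finite set $\{1,-1\}^N$ has length at most $2^N$, so there is a finite $T$ with $\mathcal{S}^t(\bm{\sigma})=\mathcal{S}^T(\bm{\sigma})=:\mathcal{S}$ for all $t\ge T$; by Eq.~(\ref{eq:Pex_mod}) the quantities $P^\mathrm{(ex)}_t$, $b_{i,t}$, and $c_t$ are then stationary in $t$, which is the finite-step convergence. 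I would next record two consequences of stationarity: the equality $\mathcal{S}^{T+1}(\bm{\sigma})=\mathcal{S}^T(\bm{\sigma})$ says that the last constraint deletes nothing, so $c_\infty$ is constant on $\mathcal{S}$ (the clearing price is common knowledge); and by the self-consistency $\mathcal{S}^t(\hat{\bm{\sigma}})=\mathcal{S}$ for $\hat{\bm{\sigma}}\in\mathcal{S}$ noted in the excerpt, $b_{i,\infty}(\hat{\bm{\sigma}})$ is the conditional expectation of $g$ under $\mathcal{P}$ restricted to $\mathcal{S}$ given $\sigma_i=\hat{\sigma}_i$, hence depends on $\hat{\bm{\sigma}}$ only through $\hat{\sigma}_i$ (the affine form $\beta_{i,t}+\gamma_{i,t}\hat{\sigma}_i$).

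The heart of the argument is the term-by-term agreement, which I would prove by a covariance identity rather than by quoting the general theorem. Writing $\mathbb{E}_{\mathcal{S}}$, $\mathrm{Cov}_{\mathcal{S}}$, $\mathrm{Var}_{\mathcal{S}}$ for moments under $\mathcal{P}$ conditioned on $\mathcal{S}$, the fact that $b_{i,\infty}=\mathbb{E}_{\mathcal{S}}[g\mid\sigma_i]$ is $\sigma_i$-measurable gives, by the tower property, $\mathbb{E}_{\mathcal{S}}[g\,b_{i,\infty}]=\mathbb{E}_{\mathcal{S}}[b_{i,\infty}^2]$ and $\mathbb{E}_{\mathcal{S}}[b_{i,\infty}]=\mathbb{E}_{\mathcal{S}}[g]$, so $\mathrm{Cov}_{\mathcal{S}}(g,b_{i,\infty})=\mathrm{Var}_{\mathcal{S}}(b_{i,\infty})\ge 0$. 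Summing over $i$ and using $\frac{1}{N}\sum_i b_{i,\infty}=c_\infty$ with $c_\infty$ constant on $\mathcal{S}$, the left-hand side collapses to $\mathrm{Cov}_{\mathcal{S}}(g,c_\infty)=0$, whence $\sum_i\mathrm{Var}_{\mathcal{S}}(b_{i,\infty})=0$. Non-negativity of each summand then forces $\mathrm{Var}_{\mathcal{S}}(b_{i,\infty})=0$ for every $i$, so each $b_{i,\infty}$ is constant on $\mathcal{S}$ and equals its mean $\mathbb{E}_{\mathcal{S}}[g]$. Combining, $c_\infty=\frac{1}{N}\sum_i b_{i,\infty}=\mathbb{E}_{\mathcal{S}}[g]=\sum_{\hat{\bm{\sigma}}}g(\hat{\bm{\sigma}})P^\mathrm{(ex)}_\infty(\hat{\bm{\sigma}};\bm{\sigma})$, which is the claimed triple equality.

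I expect the individual-agreement step to be the main obstacle. Common knowledge of the clearing price only asserts that the \emph{average} of the bids is constant on $\mathcal{S}$, which a priori constrains merely the sum of the deviations $b_{i,\infty}-\mathbb{E}_{\mathcal{S}}[g]$ to vanish, not each deviation separately; it is exactly the non-negativity of the per-player conditional variances---the \emph{stochastic regularity} of the arithmetic mean in the language of Refs.~\cite{McKPag1986,NBGMP1990}---that upgrades the single scalar identity to agreement player by player. I would also check that the denominators appearing in $\beta_{i,t}$ and $\gamma_{i,t}$ in Eqs.~(\ref{eq:def_beta})--(\ref{eq:def_gamma}) never vanish, which is precisely where the consistency hypothesis $\bm{\sigma}\in\supp\mathcal{P}$ is used.
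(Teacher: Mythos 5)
Your proof is correct, but there is nothing in the paper to match it against: the paper does not prove Theorem~\ref{prop:CKE} at all. It is stated as an imported result---an application of the general consensus theorems of McKelvey--Page \cite{McKPag1986} and Nielsen et al.\ \cite{NBGMP1990} to this particular market---so the comparison is between your self-contained derivation and a citation. Your route specializes the general machinery to the arithmetic-mean statistic: finite convergence follows because the sets $\mathcal{S}^t(\bm{\sigma})$ form a nested sequence of nonempty subsets of the finite cube and hence stabilize; after stabilization, (i) the clearing price is constant on the limit set $\mathcal{S}$, and (ii) by the equivalence-class property $\mathcal{S}^t(\hat{\bm{\sigma}})=\mathcal{S}^t(\bm{\sigma})$ for $\hat{\bm{\sigma}}\in\mathcal{S}^t(\bm{\sigma})$ noted in the paper, each bid restricted to $\mathcal{S}$ is the conditional expectation $\mathbb{E}_{\mathcal{S}}[g\mid\sigma_i]$; then the identity $\mathrm{Cov}_{\mathcal{S}}(g,b_{i,\infty})=\mathrm{Var}_{\mathcal{S}}(b_{i,\infty})\ge 0$, summed over $i$ against the constancy of the average, forces every per-player variance to vanish and yields the triple equality. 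Your closing paragraph correctly identifies the crux: common knowledge of the \emph{average} constrains only one scalar, and it is precisely because each covariance is a variance (the stochastic regularity of the mean) that this upgrades to player-by-player agreement. What the citation buys the paper is generality (any stochastically regular statistic, not just the mean) and brevity; what your argument buys is an elementary, self-contained proof with an explicit round bound (at most $2^N$, since each strict refinement removes at least one state) and a transparent localization of hypotheses. Two harmless slips: nonemptiness of $\mathcal{S}^t(\bm{\sigma})$ is automatic because $\bm{\sigma}$ trivially satisfies its own price constraints---consistency $\bm{\sigma}\in\supp\mathcal{P}$ is instead what keeps the Bayesian denominators nonzero, as you note at the end; and variance zero gives $b_{i,\infty}=\mathbb{E}_{\mathcal{S}}[g]$ only $\mathcal{P}$-almost surely on $\mathcal{S}$, which suffices because the theorem evaluates all three quantities at the true state $\bm{\sigma}$, which has positive prior probability.
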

We call this equilibrium state as common knowledge equilibrium.

The next theorem gives the necessary and sufficient condition for $c_t \left( \bm{\sigma} \right)$ to converge to the true value $g\left( \bm{\sigma} \right)$ in $t \rightarrow \infty$ for arbitrary (consistent) prior distribution.
We denote step function as $\theta(\cdots)$.
\begin{theorem}[Feigenbaum et al. \cite{FFPS2005}]
\label{prop:CKE_threshold}
The necessary and sufficient condition for $c_t \left( \bm{\sigma} \right)$ to converge to the true value $g\left( \bm{\sigma} \right)$ in $t \rightarrow \infty$ for arbitrary (consistent) prior $\mathcal{P}\left( \hat{\bm{\sigma}} \right)$ is that $g\left( \bm{\sigma} \right)$ is written as a weighted threshold function
\begin{eqnarray}
 g\left( \bm{\sigma} \right) &=& 2\theta \left( \sum_{i=1}^N w_i \sigma_i - 1 \right) -1
\end{eqnarray}
with some real constants $w_1, \cdots, w_N$.
Furthermore, the convergence occurs after at most $N$ rounds.
\end{theorem}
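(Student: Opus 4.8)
The plan is to turn the statement into a question about a single object, the equilibrium information set $\mathcal{S}^\infty(\bm{\sigma})$, and then prove the two implications with a monotonicity argument and a convex-separation argument respectively. First I would record what Theorem~\ref{prop:CKE} gives: at equilibrium the common clearing price is the posterior mean of the payoff over the surviving states,
\[ c_\infty(\bm{\sigma}) = \frac{\sum_{\hat{\bm{\sigma}}\in\mathcal{S}^\infty(\bm{\sigma})} g(\hat{\bm{\sigma}})\,\mathcal{P}(\hat{\bm{\sigma}})}{\sum_{\hat{\bm{\sigma}}\in\mathcal{S}^\infty(\bm{\sigma})}\mathcal{P}(\hat{\bm{\sigma}})}, \]
so, because $g\in\{1,-1\}$, the price equals $g(\bm{\sigma})$ exactly when $g$ is constant on the support of this posterior. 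The theorem is therefore equivalent to: $g$ is constant on every equilibrium set for every consistent prior iff $g$ is a weighted threshold function. As a second preliminary I would extract the equilibrium condition itself: Theorem~\ref{prop:CKE} says every player's bid equals $c_\infty$, i.e.\ $\mathbb{E}[g\mid\hat{\sigma}_i=v]=c_\infty$ for each coordinate $i$ and each value $v$ attained on $\mathcal{S}^\infty(\bm{\sigma})$ (apply the theorem to the various true states sharing this cell). A one-line computation then yields $\mathrm{Cov}(g,\hat{\sigma}_i)=0$ under the equilibrium posterior for every $i$.

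For the ``if'' direction I would assume $g(\bm{\sigma})=2\theta\!\left(\sum_i w_i\sigma_i-1\right)-1$ and simply combine these two facts. Setting $L\equiv\sum_i w_i\hat{\sigma}_i$, linearity of covariance gives $\mathrm{Cov}(g,L)=\sum_i w_i\,\mathrm{Cov}(g,\hat{\sigma}_i)=0$. But $g$ is a nondecreasing function of the single real quantity $L$, so $g$ and $L$ are comonotone; hence $\mathrm{Cov}(g,L)\ge 0$, with equality only if $g$ is almost surely constant on the support of the equilibrium posterior. Therefore $g$ is constant on $\mathcal{S}^\infty(\bm{\sigma})$ and $c_\infty(\bm{\sigma})=g(\bm{\sigma})$, for every prior.

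For the ``only if'' direction I would argue by contraposition using the separating-hyperplane characterization of threshold functions. If $g$ is not a weighted threshold function, the preimages $g^{-1}(1)$ and $g^{-1}(-1)$ cannot be separated by any hyperplane, so their convex hulls meet; equivalently there are probability distributions $\mu$ on $g^{-1}(1)$ and $\nu$ on $g^{-1}(-1)$ with the same mean vector, and hence the same one-coordinate marginals $\mu(\hat{\sigma}_i=v)=\nu(\hat{\sigma}_i=v)$ for all $i,v$. I would then take the consistent prior $\mathcal{P}=\tfrac12\mu+\tfrac12\nu$. The matched marginals make the round-$0$ conditional expectation $\mathbb{E}[g\mid\hat{\sigma}_i=v]$ vanish for every player and every value, so no bid depends on any private bit, no state is ever ruled out, and the market freezes at $c_\infty=0\neq g(\bm{\sigma})$. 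This exhibits a prior for which convergence to the true value fails.

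Finally, the bound of $N$ rounds follows from the linear structure established in Section~\ref{sec:model}: each round imposes one linear equation on $\hat{\bm{\sigma}}$, so a round that strictly shrinks $\mathcal{S}^t(\bm{\sigma})$ must strictly decrease the dimension of its affine hull (the separating hyperplane cannot contain the whole hull), and the dynamics are stationary as soon as a round fails to shrink the set; since the dimension starts at most $N$, equilibrium is reached within $N$ rounds. I expect the main obstacle to be the necessity direction: converting the combinatorial hypothesis that $g$ is \emph{not} a weighted threshold function into an explicit defeating prior. The two ideas that make it work are that non-separability is exactly the intersection of the two convex hulls, and that equalizing the coordinate marginals of the $+1$- and $-1$-masses renders every player's information useless, so the market's partition never refines.
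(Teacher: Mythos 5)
First, a point of reference: the paper does not prove this theorem at all --- it is quoted from Ref.~\cite{FFPS2005} in the ``Previous studies'' section --- so your proposal can only be judged against the statement itself and the original argument of Feigenbaum et al., whose skeleton you have in fact reconstructed: (i) at equilibrium every conditional bid equals the price, so $\mathrm{Cov}(g,\hat{\sigma}_i)=0$ under the equilibrium posterior (your use of Theorem~\ref{prop:CKE} at every state of the cell, via $P^\mathrm{(ex)}_t(\cdot;\bm{\sigma})=P^\mathrm{(ex)}_t(\cdot;\bm{\sigma}^\prime)$ for $\bm{\sigma}^\prime$ in the support, is the right way to get this for every attained value $v$); (ii) comonotonicity of $g$ with $L=\sum_i w_i\hat{\sigma}_i$ gives $\mathrm{Cov}(g,L)\ge 0$ with equality only if $g$ is constant on the support, forcing $c_\infty=g(\bm{\sigma})$; (iii) a common point of the convex hulls yields $\mu,\nu$ with equal coordinate marginals and the prior $\tfrac12\mu+\tfrac12\nu$ freezes the market at price $0$; (iv) each informative round cuts the surviving set by a hyperplane, dropping its affine dimension, whence the bound of $N$ rounds. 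Steps (i), (ii), (iv), and the construction in (iii) are all sound.

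The genuine gap is the opening claim of your necessity argument: ``if $g$ is not a weighted threshold function, the preimages $g^{-1}(1)$ and $g^{-1}(-1)$ cannot be separated by any hyperplane.'' For the form stated in the theorem this implication is false, because that form fixes the threshold at $1$ with no free constant term: over $\{1,-1\}^N$, expressibility as $2\theta\left(\sum_i w_i\sigma_i-1\right)-1$ requires a hyperplane separating $g^{-1}(1)$ not only from $g^{-1}(-1)$ but also from the origin. Concretely, take $N=2$ and $g(\bm{\sigma})=-1$ iff $\bm{\sigma}=(-1,-1)$ (an OR security): the constraints $w_1-w_2\ge 1$ and $-w_1+w_2\ge 1$ are jointly infeasible, so $g$ is not of the stated form, yet the convex hulls of $\left\{(1,1),(1,-1),(-1,1)\right\}$ and $\left\{(-1,-1)\right\}$ are disjoint, so your construction produces no defeating prior --- and indeed none exists, since $g=\mathrm{sign}\left(\sigma_1+\sigma_2+1\right)$ is a nondecreasing function of a linear form and your own sufficiency argument (which never uses the particular value of the threshold) shows convergence for every consistent prior. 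What your proof actually establishes is the intended characterization: convergence for all consistent priors iff the two hulls are disjoint, i.e., iff $g(\bm{\sigma})=\mathrm{sign}\left(\sum_i w_i\sigma_i-w_0\right)$ for some weights and some real offset $w_0$. You cannot repair the missing equivalence (it is false); the honest fix is to observe that the normalization ``threshold $=1$'' in the quoted statement is strictly stronger than linear separability in the $\pm 1$ convention, and to prove the theorem with a free constant $w_0$, which your argument then does completely.
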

For example, when $N=2$ and $g\left( \bm{\sigma} \right)=\sigma_1\sigma_2$ (which corresponds to XOR function), $c_t \left( \bm{\sigma} \right)$ does not converge to the true value $g\left( \bm{\sigma} \right)$ for uniform prior distribution $\mathcal{P}\left( \hat{\bm{\sigma}} \right)=1/4$.

\section{Separable security}
\label{sec:separable}
Because of Theorem \ref{prop:CKE_threshold}, clearing price of a security which cannot be written as a weighted threshold function does not necessarily converge to the true value $g\left( \bm{\sigma} \right)$.
In this paper, as one class of Boolean securities, we consider separable securities, which are of the form
\begin{eqnarray}
 g\left( \bm{\sigma} \right) &=& \prod_{i=1}^N g_i \left( \sigma_i \right)
\end{eqnarray}
and each $g_i$ takes the value in $\left\{ 1, -1 \right\}$.
The general form of $g_i$ is
\begin{eqnarray}
 g_i\left( \sigma_i \right) &=& (-1)^{a_i+r_i\frac{1-\sigma_i}{2}} \nonumber \\
 &=& (-1)^{a_i} \sigma_i^{r_i}
\end{eqnarray}
where $a_i\in \left\{ 0, 1 \right\}$ and $r_i\in \left\{ 0, 1 \right\}$.
Therefore, separable securities can also be called parity securities.
The number of separable securities is $2^{N+1}$, and it is much smaller than the number of all possible securities $2^{2^N}$.
Securities of this form contain XOR security, and therefore they are not necessarily written as a weighted threshold function.

It should be noted that when prior $\mathcal{P}\left( \hat{\bm{\sigma}} \right)$ depends on the true state of the world $\bm{\sigma}$ such as $\mathcal{P}\left( \hat{\bm{\sigma}} \right)=\delta_{\hat{\bm{\sigma}}, \bm{\sigma}}$, convergence to the true value $g\left( \bm{\sigma} \right)$ trivially occurs.
Therefore, we investigate only priors which assign non-zero probability for $\forall \hat{\bm{\sigma}} \in \left\{ 1, -1 \right\}^N$ and do not depend on $\bm{\sigma}$.

The next proposition is the first main result of this paper.
\begin{proposition}
\label{prop:CKE_separable}
Suppose that a security is separable and a common prior probability distribution is of the form (uniformly biased distribution)
\begin{eqnarray}
 \mathcal{P}\left( \hat{\bm{\sigma}} \right) &=& \prod_{i=1}^N \frac{e^{h\hat{\sigma}_i}}{2\cosh\left( h \right)}.
\end{eqnarray}
Then, for $h\neq 0$, $c_t \left( \bm{\sigma} \right)$ converges to the true value $g\left( \bm{\sigma} \right)$ in $t \rightarrow \infty$.
\end{proposition}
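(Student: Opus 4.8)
The plan is to show that the \emph{first} clearing price already reveals enough about $\bm{\sigma}$ to pin down $g(\bm{\sigma})$, so that the conclusion follows from Theorem~\ref{prop:CKE} almost immediately. First I would absorb the constant signs into a prefactor: setting $R \equiv \{ i : r_i = 1 \}$ and $m \equiv |R|$, the security reads $g(\bm{\sigma}) = (-1)^{\sum_i a_i} \prod_{i \in R} \sigma_i$. If $R = \emptyset$ then $g$ is constant and every bid equals that constant from the outset, so the claim is trivial; hence I assume $m \ge 1$.

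Next I would compute the initial bids. Since $P^{\mathrm{(ex)}}_0 = \mathcal{P}$ is a product distribution, $P_{i,0}(\cdot;\bm{\sigma})$ is just $\mathcal{P}$ conditioned on $\hat{\sigma}_i = \sigma_i$, which leaves the remaining coordinates independent with common mean $\mathbb{E}_{\mathcal{P}}[\hat{\sigma}_j] = \tanh h$. Factorizing the product form of $g$ then gives, with $\tau \equiv \tanh h$,
\begin{eqnarray}
 b_{i,0}(\bm{\sigma}) &=& \mathbb{E}_{\mathcal{P}}\left[ g(\hat{\bm{\sigma}}) \mid \hat{\sigma}_i = \sigma_i \right]
 = \left\{
 \begin{array}{ll}
 (-1)^{\sum_j a_j}\, \tau^{m-1}\, \sigma_i & (i \in R) \\
 (-1)^{\sum_j a_j}\, \tau^{m} & (i \notin R),
 \end{array}
 \right.
\end{eqnarray}
and averaging over $i$ yields
\begin{eqnarray}
 c_1(\bm{\sigma}) &=& \frac{(-1)^{\sum_j a_j}\,\tau^{m-1}}{N} \left[ \sum_{i \in R} \sigma_i + (N - m)\, \tau \right].
\end{eqnarray}
Thus $c_1(\bm{\sigma})$ depends on $\bm{\sigma}$ only through the partial sum $\sum_{i \in R} \sigma_i$, with slope $(-1)^{\sum_j a_j}\tau^{m-1}/N$, which is nonzero since $h \neq 0$ forces $\tau \neq 0$.

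The crux, which I would isolate as the key step, is the following. Because the slope is nonzero, the first-round constraint $c_1(\hat{\bm{\sigma}}) = c_1(\bm{\sigma})$ is equivalent to the equality of partial sums $\sum_{i \in R} \hat{\sigma}_i = \sum_{i \in R} \sigma_i$. Fixing this sum fixes the number of $-1$'s among the coordinates indexed by $R$, hence fixes their parity $\prod_{i \in R} \hat{\sigma}_i$; consequently $g(\hat{\bm{\sigma}}) = g(\bm{\sigma})$ for every $\hat{\bm{\sigma}} \in \mathcal{S}^1(\bm{\sigma})$. I expect this observation---that revealing the \emph{sum} over $R$ already reveals the \emph{parity} over $R$---to be the main conceptual point, the bid computation above being routine.

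To close, note that $\mathcal{S}^{t}(\bm{\sigma}) \subseteq \mathcal{S}^{1}(\bm{\sigma})$ for all $t \ge 1$, so $g$ is constant and equal to $g(\bm{\sigma})$ on the support $\mathcal{S}^\infty(\bm{\sigma})$ of $P^{\mathrm{(ex)}}_\infty(\cdot;\bm{\sigma})$. Theorem~\ref{prop:CKE} then gives $c_\infty(\bm{\sigma}) = \sum_{\hat{\bm{\sigma}}} g(\hat{\bm{\sigma}}) P^{\mathrm{(ex)}}_\infty(\hat{\bm{\sigma}};\bm{\sigma}) = g(\bm{\sigma})$. One can even sharpen this to exact convergence after two rounds: since $P_{i,1}(\cdot;\bm{\sigma})$ is supported in $\mathcal{S}^1(\bm{\sigma})$, on which $g \equiv g(\bm{\sigma})$, each bid satisfies $b_{i,1}(\bm{\sigma}) = g(\bm{\sigma})$, whence $c_2(\bm{\sigma}) = g(\bm{\sigma})$.
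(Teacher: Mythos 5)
Your proof is correct and follows essentially the same route as the paper: both compute the round-$0$ bids and the first clearing price explicitly, and both hinge on the same key observation that, since $h\neq 0$ makes the price reveal the partial sum $\sum_{i:r_i=1}\sigma_i$, and this sum fixes the number of $-1$'s and hence the parity $\prod_{i:r_i=1}\sigma_i$, the payoff $g$ is pinned down after one round. The only difference is presentational: you isolate this as the statement that $g$ is constant on $\mathcal{S}^1\left(\bm{\sigma}\right)$ and finish by a support-inclusion argument (or by Theorem \ref{prop:CKE}), whereas the paper performs the same cancellation inside the explicit Bayesian computation of $b_{i,1}$ and then verifies $b_{i,2}=g\left(\bm{\sigma}\right)$ directly.
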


\begin{proof}
We explicitly calculate the time evolution.
First, bids of players at $t=0$ are
\begin{eqnarray}
 b_{i,0}\left( \bm{\sigma} \right) &=& \frac{\sum_{\hat{\bm{\sigma}}} \left\{ \prod_{j=1}^N (-1)^{a_j} \sigma_j^{r_j} \right\} \delta_{\hat{\sigma}_i, \sigma_i} \mathcal{P}\left( \hat{\bm{\sigma}} \right)}{\sum_{\hat{\bm{\sigma}}} \delta_{\hat{\sigma}_i, \sigma_i} \mathcal{P}\left( \hat{\bm{\sigma}} \right)} \nonumber \\
 &=& (-1)^{\sum_{j=1}^N a_j} \sigma_i^{r_i} \prod_{j\neq i} \tanh^{r_j} \left( h \right).
\end{eqnarray}
Then, the clearing price at the first round is
\begin{eqnarray}
 c_1 \left( \bm{\sigma} \right) &=& \frac{1}{N} (-1)^{\sum_{j=1}^N a_j} \sum_{i=1}^N \sigma_i^{r_i} \prod_{j\neq i} \tanh^{r_j} \left( h \right).
\end{eqnarray}
Next, posterior probability distribution for external observer at $t=1$ is
\begin{eqnarray}
 P^\mathrm{(ex)}_1\left( \hat{\bm{\sigma}}; \bm{\sigma} \right) &=&  \frac{\mathbb{I}\left( \sum_{i=1}^N \hat{\sigma}_i^{r_i} \prod_{j\neq i} \tanh^{r_j} \left( h \right) = \sum_{i=1}^N \sigma_i^{r_i} \prod_{j\neq i} \tanh^{r_j} \left( h \right) \right) \prod_{j=1}^N e^{h\hat{\sigma}_j}}{\sum_{\hat{\bm{s}}} \mathbb{I}\left( \sum_{i=1}^N \hat{s}_i^{r_i} \prod_{j\neq i} \tanh^{r_j} \left( h \right) = \sum_{i=1}^N \sigma_i^{r_i} \prod_{j\neq i} \tanh^{r_j} \left( h \right) \right) \prod_{j=1}^N e^{h\hat{s}_j}} \nonumber \\
 &=& \frac{\mathbb{I}\left( \sum_{k:r_k=1} \hat{\sigma}_k = \sum_{k:r_k=1} \sigma_k \right) \prod_{j=1}^N e^{h\hat{\sigma}_j}}{\sum_{\hat{\bm{s}}} \mathbb{I}\left( \sum_{k:r_k=1} \hat{s}_k = \sum_{k:r_k=1} \sigma_k \right) \prod_{j=1}^N e^{h\hat{s}_j}}.
\end{eqnarray}
In order to obtain the second line, we have used $r_j\in \left\{ 0, 1 \right\}$ and the assumption $h\neq 0$.
Then, for $h\neq 0$, bids of players at $t=1$ are
\begin{eqnarray}
 b_{i,1}\left( \bm{\sigma} \right) &=& \frac{\sum_{\hat{\bm{\sigma}}} \left\{ \prod_{j=1}^N (-1)^{a_j} \hat{\sigma}_j^{r_j} \right\} \delta_{\hat{\sigma}_i, \sigma_i} P^\mathrm{(ex)}_1\left( \hat{\bm{\sigma}}; \bm{\sigma} \right)}{\sum_{\hat{\bm{\sigma}}} \delta_{\hat{\sigma}_i, \sigma_i} P^\mathrm{(ex)}_1\left( \hat{\bm{\sigma}}; \bm{\sigma} \right)} \nonumber \\
 &=& \frac{\sum_{\hat{\bm{\sigma}}} \left\{ \prod_{j=1}^N (-1)^{a_j} \hat{\sigma}_j^{r_j} \right\} \delta_{\hat{\sigma}_i, \sigma_i} \mathbb{I}\left( \sum_{k:r_k=1} \hat{\sigma}_k = \sum_{k:r_k=1} \sigma_k \right) \prod_{j=1}^N e^{h\hat{\sigma}_j}}{\sum_{\hat{\bm{\sigma}}} \delta_{\hat{\sigma}_i, \sigma_i} \mathbb{I}\left( \sum_{k:r_k=1} \hat{\sigma}_k = \sum_{k:r_k=1} \sigma_k \right) \prod_{j=1}^N e^{h\hat{\sigma}_j}} \nonumber \\
 &=& \frac{\left\{ \prod_{j=1}^N (-1)^{a_j} \right\} \sigma_i^{r_i} \sum_{\left\{ \hat{\sigma}_j \right\}_{j\neq i}}^{r_j=1} \left\{ \prod_{j:r_j=1}^{j\neq i} \hat{\sigma}_j \right\} \mathbb{I}\left( \sum_{k:r_k=1}^{k\neq i} \hat{\sigma}_k = \sum_{k:r_k=1}^{k\neq i} \sigma_k \right) \prod_{j:r_j=1}^{j\neq i} e^{h\hat{\sigma}_j}}{\sum_{\left\{ \hat{\sigma}_j \right\}_{j\neq i}}^{r_j=1} \mathbb{I}\left( \sum_{k:r_k=1}^{k\neq i} \hat{\sigma}_k = \sum_{k:r_k=1}^{k\neq i} \sigma_k \right) \prod_{j:r_j=1}^{j\neq i} e^{h\hat{\sigma}_j}} \nonumber \\
 &=& \frac{\left\{ \prod_{j=1}^N (-1)^{a_j} \right\} \sigma_i^{r_i} \sum_{\left\{ \hat{\sigma}_j \right\}_{j\neq i}}^{r_j=1} \left\{ \prod_{j:r_j=1}^{j\neq i} (-1)^{\frac{1-\hat{\sigma}_j}{2}} \right\} \mathbb{I}\left( \sum_{k:r_k=1}^{k\neq i} \hat{\sigma}_k = \sum_{k:r_k=1}^{k\neq i} \sigma_k \right) \prod_{j:r_j=1}^{j\neq i} e^{h\hat{\sigma}_j}}{\sum_{\left\{ \hat{\sigma}_j \right\}_{j\neq i}}^{r_j=1} \mathbb{I}\left( \sum_{k:r_k=1}^{k\neq i} \hat{\sigma}_k = \sum_{k:r_k=1}^{k\neq i} \sigma_k \right) \prod_{j:r_j=1}^{j\neq i} e^{h\hat{\sigma}_j}} \nonumber \\
 &=& \prod_{j=1}^N (-1)^{a_j} \sigma_j^{r_j} \nonumber \\
 &=& g\left( \bm{\sigma} \right),
 \label{eq:separable_b1}
\end{eqnarray}
which implies that convergence to the true price occurs.
We note that in the third line, we have calculated the sum with respect to $\left\{ \hat{\sigma}_j \right\}$ with $r_j=0$ or $j=i$ in both denominator and numerator.
In fact, at $t=2$
\begin{eqnarray}
 c_2 \left( \bm{\sigma} \right) &=& g\left( \bm{\sigma} \right)
\end{eqnarray}
\begin{eqnarray}
 P^\mathrm{(ex)}_2\left( \hat{\bm{\sigma}}; \bm{\sigma} \right) &=& \frac{\mathbb{I}\left( g\left( \hat{\bm{\sigma}} \right) = g\left( \bm{\sigma} \right) \right) \mathbb{I}\left( c_1\left( \hat{\bm{\sigma}} \right) = c_1\left( \bm{\sigma} \right) \right) \mathcal{P}\left( \hat{\bm{\sigma}} \right)}{\sum_{\hat{\bm{s}}} \mathbb{I}\left( g\left( \hat{\bm{s}} \right) = g\left( \bm{\sigma} \right) \right) \mathbb{I}\left( c_1\left( \hat{\bm{s}} \right) = c_1\left( \bm{\sigma} \right) \right) \mathcal{P}\left( \hat{\bm{s}} \right)}
\end{eqnarray}
and
\begin{eqnarray}
 b_{i,2}\left( \bm{\sigma} \right) &=& \frac{\sum_{\hat{\bm{\sigma}}} g\left( \hat{\bm{\sigma}} \right) \delta_{\hat{\sigma}_i, \sigma_i} P^\mathrm{(ex)}_2\left( \hat{\bm{\sigma}}; \bm{\sigma} \right)}{\sum_{\hat{\bm{\sigma}}} \delta_{\hat{\sigma}_i, \sigma_i} P^\mathrm{(ex)}_2\left( \hat{\bm{\sigma}}; \bm{\sigma} \right)} \nonumber \\
 &=& g\left( \bm{\sigma} \right),
\end{eqnarray}
and convergence indeed occurs.
\end{proof}

This result suggests that when there is a common trend in private information of players, convergence to the true value occurs.
It should be noted that when $h=0$, $b_{i,0}\left( \bm{\sigma} \right)=0$ and convergence to the true price does not occur.

\section{Totally symmetric security}
\label{sec:symmetric}
Here, as another class of Boolean securities, we investigate totally symmetric securities.
When we define $\bm{\sigma}_\pi \equiv \left( \sigma_{\pi(1)}, \cdots, \sigma_{\pi(N)} \right)$ with a permutation $\pi$ on $\left\{ 1, \cdots, N \right\}$, a totally symmetric security is defined as a security with
\begin{eqnarray}
 g\left( \bm{\sigma}_\pi \right) &=& g\left( \bm{\sigma} \right)
\end{eqnarray}
for arbitrary permutation $\pi$.
The general form of totally symmetric securities is
\begin{eqnarray}
 g\left( \bm{\sigma} \right) &=& \sum_{k=0}^N A_k \mathbb{I}\left( \sum_{j=1}^N \sigma_j = -N+2k \right)
 \label{eq:form_symmetric}
\end{eqnarray}
with $A_k\in  \left\{ 1, -1 \right\}$.
The number of totally symmetric securities is $2^{N+1}$.
Totally symmetric securities are not necessarily written as a weighted threshold function.

The next proposition is the second main result of this paper.
\begin{proposition}
\label{prop:CKE_symmetric}
Suppose that a security is totally symmetric and a common prior probability distribution is also totally symmetric, that is,
\begin{eqnarray}
 \mathcal{P}\left( \hat{\bm{\sigma}}_\pi \right) &=& \mathcal{P}\left( \hat{\bm{\sigma}} \right)
\end{eqnarray}
for arbitrary permutation $\pi$.
Then, if the relation
\begin{eqnarray}
 \sum_{\hat{\bm{s}}} g\left( \hat{\bm{s}} \right) \hat{s}_i \mathcal{P} \left( \hat{\bm{s}} \right) &\neq& \left\{ \sum_{\hat{\bm{s}}} g\left( \hat{\bm{s}} \right) \mathcal{P} \left( \hat{\bm{s}} \right) \right\} \left\{ \sum_{\hat{\bm{s}}} \hat{s}_i \mathcal{P} \left( \hat{\bm{s}} \right) \right\} \label{eq:assumption_symmetric}
\end{eqnarray}
holds (the both-hand sides of which do not depend on $i$), $c_t \left( \bm{\sigma} \right)$ converges to the true value $g\left( \bm{\sigma} \right)$ in $t \rightarrow \infty$.
\end{proposition}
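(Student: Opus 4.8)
The plan is to exploit the total symmetry of both $g$ and $\mathcal{P}$ to show that the very first clearing price already reveals the total magnetization $\sum_i \sigma_i$, which by~(\ref{eq:form_symmetric}) determines $g(\bm{\sigma})$ completely. First I would observe that the round-zero bid of player $i$ is affine in her own bit, $b_{i,0}(\bm{\sigma}) = \beta_0 + \gamma_0 \sigma_i$, with coefficients given by~(\ref{eq:def_beta})--(\ref{eq:def_gamma}) specialized to the full state space $\left\{ 1, -1 \right\}^N$. Total symmetry guarantees that $\beta_0$ and $\gamma_0$ do not depend on $i$: any permutation transposing $i$ and $j$ leaves $g$ and $\mathcal{P}$ invariant and carries the $i$-summand to the $j$-summand. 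Consequently the first clearing price is $c_1(\bm{\sigma}) = \beta_0 + \gamma_0 \frac{1}{N}\sum_{i=1}^N \sigma_i$.

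The crux is to show $\gamma_0 \neq 0$ under hypothesis~(\ref{eq:assumption_symmetric}). Writing $G \equiv \sum_{\hat{\bm{s}}} g(\hat{\bm{s}})\mathcal{P}(\hat{\bm{s}})$, $G' \equiv \sum_{\hat{\bm{s}}} g(\hat{\bm{s}})\hat{s}_i \mathcal{P}(\hat{\bm{s}})$, and $M \equiv \sum_{\hat{\bm{s}}} \hat{s}_i \mathcal{P}(\hat{\bm{s}})$ (the last two independent of $i$ by symmetry), and using $\sum_{\hat{\bm{s}}}\mathcal{P}(\hat{\bm{s}}) = 1$, a short simplification of~(\ref{eq:def_gamma}) gives $\gamma_0 = (G' - GM)/(1 - M^2)$. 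Since~(\ref{eq:assumption_symmetric}) reads precisely $G' \neq GM$, the numerator is nonzero; moreover the same inequality forces $M^2 \neq 1$, since if $M = \pm 1$ then $\hat{s}_i$ is $\mathcal{P}$-almost surely constant and $G' = GM$ would follow. Hence the denominators $1 \pm M$ appearing in~(\ref{eq:def_beta})--(\ref{eq:def_gamma}) are nonzero and $\gamma_0$ is well defined and nonzero.

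Given $\gamma_0 \neq 0$, the round-one consistency constraint $c_1(\hat{\bm{\sigma}}) = c_1(\bm{\sigma})$ becomes $\frac{1}{N}\sum_{i=1}^N \hat{\sigma}_i = \frac{1}{N}\sum_{i=1}^N \sigma_i$, that is $\sum_{i=1}^N \hat{\sigma}_i = \sum_{i=1}^N \sigma_i$. Hence $\mathcal{S}^1(\bm{\sigma})$ is exactly the fixed-magnetization slice through $\bm{\sigma}$. By the symmetric form~(\ref{eq:form_symmetric}), $g$ is constant on each such slice, so $g(\hat{\bm{\sigma}}) = g(\bm{\sigma})$ for every $\hat{\bm{\sigma}} \in \mathcal{S}^1(\bm{\sigma})$. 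Since $P^\mathrm{(ex)}_1(\cdot;\bm{\sigma})$ is supported on $\mathcal{S}^1(\bm{\sigma})$, every player's conditional expectation of $g$ equals $g(\bm{\sigma})$; thus $b_{i,1}(\bm{\sigma}) = g(\bm{\sigma})$ for all $i$ and $c_2(\bm{\sigma}) = g(\bm{\sigma})$. This is a fixed point of the dynamics, so the clearing price stays at $g(\bm{\sigma})$, and convergence occurs in finitely many rounds, consistently with Theorem~\ref{prop:CKE}.

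I expect the only genuine obstacle to be the algebraic reduction to $\gamma_0 = (G' - GM)/(1 - M^2)$ together with the accompanying verification that $1 \pm M \neq 0$; both are routine once symmetry is invoked. The conceptual heart, by contrast, is the clean observation that a single nondegenerate linear readout of the bids already resolves the total magnetization and therefore pins down the symmetric payoff.
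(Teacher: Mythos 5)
Your proof is correct and follows essentially the same route as the paper's: affine round-zero bids $b_{i,0}=\beta_0+\gamma_0\sigma_i$ with coefficients independent of $i$ by symmetry, a first clearing price that reveals the magnetization $\sum_i\sigma_i$ once $\gamma_0\neq 0$, and constancy of the totally symmetric $g$ on fixed-magnetization slices forcing $b_{i,1}=g(\bm{\sigma})$. The only difference is that you explicitly verify the equivalence of hypothesis (\ref{eq:assumption_symmetric}) with $\gamma_0\neq 0$ via the reduction $\gamma_0=(G'-GM)/(1-M^2)$, including the degenerate case $M=\pm 1$; the paper asserts this equivalence without computation, so your added detail is a genuine (if minor) improvement rather than a deviation.
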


\begin{proof}
We explicitly calculate the time evolution.
First, bids of players at $t=0$ are
\begin{eqnarray}
 b_{i,0}\left( \bm{\sigma} \right) &=& \frac{\sum_{\hat{\bm{\sigma}}} g\left( \hat{\bm{\sigma}} \right) \delta_{\hat{\sigma}_i, \sigma_i} \mathcal{P}\left( \hat{\bm{\sigma}} \right)}{\sum_{\hat{\bm{\sigma}}} \delta_{\hat{\sigma}_i, \sigma_i} \mathcal{P}\left( \hat{\bm{\sigma}} \right)} \nonumber \\
 &=& \beta_0 + \gamma_0 \sigma_i
\end{eqnarray}
where we have defined
\begin{eqnarray}
 \beta_0 &\equiv& \frac{1}{2} \left\{ \frac{\sum_{\hat{\bm{s}}} g\left( \hat{\bm{s}} \right) \mathcal{P} \left( \hat{\bm{s}} \right) + \sum_{\hat{\bm{s}}} g\left( \hat{\bm{s}} \right) \hat{s}_i \mathcal{P} \left( \hat{\bm{s}} \right)}{\sum_{\hat{\bm{s}}} \mathcal{P} \left( \hat{\bm{s}} \right) + \sum_{\hat{\bm{s}}} \hat{s}_i \mathcal{P} \left( \hat{\bm{s}} \right)} + \frac{\sum_{\hat{\bm{s}}} g\left( \hat{\bm{s}} \right) \mathcal{P} \left( \hat{\bm{s}} \right) - \sum_{\hat{\bm{s}}} g\left( \hat{\bm{s}} \right) \hat{s}_i \mathcal{P} \left( \hat{\bm{s}} \right)}{\sum_{\hat{\bm{s}}} \mathcal{P} \left( \hat{\bm{s}} \right) - \sum_{\hat{\bm{s}}} \hat{s}_i \mathcal{P} \left( \hat{\bm{s}} \right)} \right\} \nonumber \\
 && \label{eq:beta0_sym} \\
 \gamma_0 &\equiv& \frac{1}{2} \left\{ \frac{\sum_{\hat{\bm{s}}} g\left( \hat{\bm{s}} \right) \mathcal{P} \left( \hat{\bm{s}} \right) + \sum_{\hat{\bm{s}}} g\left( \hat{\bm{s}} \right) \hat{s}_i \mathcal{P} \left( \hat{\bm{s}} \right)}{\sum_{\hat{\bm{s}}} \mathcal{P} \left( \hat{\bm{s}} \right) + \sum_{\hat{\bm{s}}} \hat{s}_i \mathcal{P} \left( \hat{\bm{s}} \right)} - \frac{\sum_{\hat{\bm{s}}} g\left( \hat{\bm{s}} \right) \mathcal{P} \left( \hat{\bm{s}} \right) - \sum_{\hat{\bm{s}}} g\left( \hat{\bm{s}} \right) \hat{s}_i \mathcal{P} \left( \hat{\bm{s}} \right)}{\sum_{\hat{\bm{s}}} \mathcal{P} \left( \hat{\bm{s}} \right) - \sum_{\hat{\bm{s}}} \hat{s}_i \mathcal{P} \left( \hat{\bm{s}} \right)} \right\} \nonumber \\
 && \label{eq:gamma0_sym}
\end{eqnarray}
according to Eqs. (\ref{eq:def_beta}) and (\ref{eq:def_gamma}).
It should be noted that the right-hand sides of Eqs. (\ref{eq:beta0_sym}) and (\ref{eq:gamma0_sym}) do not depend on $i$ because $g$ and $\mathcal{P}$ are totally symmetric.
Then, the clearing price at the first round is
\begin{eqnarray}
 c_1 \left( \bm{\sigma} \right) &=& \beta_0 + \gamma_0 \frac{1}{N} \sum_{i=1}^N \sigma_i.
\end{eqnarray}
Next, posterior probability distribution for external observer at $t=1$ is
\begin{eqnarray}
 P^\mathrm{(ex)}_1\left( \hat{\bm{\sigma}}; \bm{\sigma} \right) &=&  \frac{\mathbb{I}\left( \gamma_0 \sum_{i=1}^N \hat{\sigma}_i = \gamma_0 \sum_{i=1}^N \sigma_i \right) \mathcal{P}\left( \hat{\bm{\sigma}} \right)}{\sum_{\hat{\bm{s}}} \mathbb{I}\left( \gamma_0 \sum_{i=1}^N \hat{s}_i = \gamma_0 \sum_{i=1}^N \sigma_i \right) \mathcal{P}\left( \hat{\bm{s}} \right)}.
\end{eqnarray}
By using the assumption (\ref{eq:assumption_symmetric}), that is, $\gamma_0\neq 0$, we obtain
\begin{eqnarray}
 P^\mathrm{(ex)}_1\left( \hat{\bm{\sigma}}; \bm{\sigma} \right) &=&  \frac{\mathbb{I}\left( \sum_{i=1}^N \hat{\sigma}_i = \sum_{i=1}^N \sigma_i \right) \mathcal{P}\left( \hat{\bm{\sigma}} \right)}{\sum_{\hat{\bm{s}}} \mathbb{I}\left( \sum_{i=1}^N \hat{s}_i = \sum_{i=1}^N \sigma_i \right) \mathcal{P}\left( \hat{\bm{s}} \right)}.
\end{eqnarray}
Then, bids of players at $t=1$ are
\begin{eqnarray}
 b_{i,1}\left( \bm{\sigma} \right) &=& \frac{\sum_{\hat{\bm{\sigma}}} g\left( \hat{\bm{\sigma}} \right) \delta_{\hat{\sigma}_i, \sigma_i} P^\mathrm{(ex)}_1\left( \hat{\bm{\sigma}}; \bm{\sigma} \right)}{\sum_{\hat{\bm{\sigma}}} \delta_{\hat{\sigma}_i, \sigma_i} P^\mathrm{(ex)}_1\left( \hat{\bm{\sigma}}; \bm{\sigma} \right)} \nonumber \\
 &=& \frac{\sum_{\hat{\bm{\sigma}}} g\left( \hat{\bm{\sigma}} \right) \delta_{\hat{\sigma}_i, \sigma_i} \mathbb{I}\left( \sum_{j=1}^N \hat{\sigma}_j = \sum_{j=1}^N \sigma_j \right) \mathcal{P}\left( \hat{\bm{\sigma}} \right)}{\sum_{\hat{\bm{\sigma}}} \delta_{\hat{\sigma}_i, \sigma_i} \mathbb{I}\left( \sum_{j=1}^N \hat{\sigma}_j = \sum_{j=1}^N \sigma_j \right) \mathcal{P}\left( \hat{\bm{\sigma}} \right)} \nonumber \\
 &=& \frac{\sum_{\hat{\bm{\sigma}}} g\left( \bm{\sigma} \right) \delta_{\hat{\sigma}_i, \sigma_i} \mathbb{I}\left( \sum_{j=1}^N \hat{\sigma}_j = \sum_{j=1}^N \sigma_j \right) \mathcal{P}\left( \hat{\bm{\sigma}} \right)}{\sum_{\hat{\bm{\sigma}}} \delta_{\hat{\sigma}_i, \sigma_i} \mathbb{I}\left( \sum_{j=1}^N \hat{\sigma}_j = \sum_{j=1}^N \sigma_j \right) \mathcal{P}\left( \hat{\bm{\sigma}} \right)} \nonumber \\
 &=& g\left( \bm{\sigma} \right),
 \label{eq:symmetric_b1}
\end{eqnarray}
where we have used the fact that the value of a totally symmetric Boolean function is determined only by $\sum_{j=1}^N \hat{\sigma}_j$ (Eq. (\ref{eq:form_symmetric})).
This implies that convergence to the true price occurs (similarly to Proposition \ref{prop:CKE_separable}).
\end{proof}

We remark that convergence to the true price does not occur when $\gamma_0=0$.

\section{Discussion}
\label{sec:discussion}
In this paper, we have investigated equilibrium of iterative process in which the clearing price is publicly announced and players revise their bids according to the public information and their own private information, in distributed information market model with two classes of Boolean securities, that is, separable and totally symmetric.
As is well known, the equilibrium state of this model is described by the concept of common knowledge.
We have theoretically showed that the clearing price of separable securities converges to the true value when a common prior probability distribution of information of each player is uniformly biased distribution.
In contrast, when a common prior probability distribution is uniform distribution over $\left\{ 1, -1 \right\}^N$, convergence to the true value does not occur.
We have also theoretically showed that the clearing price of totally symmetric securities converges to the true value when a common prior probability distribution of information of each player satisfies some condition.

The convergence to the true price in separable and totally symmetric securities seems to come from the fact that structure of $g^{-1}(1)$ and $g^{-1}(-1)$ is simple.
As we can see in Eqs. (\ref{eq:separable_b1}) and (\ref{eq:symmetric_b1}), although convergence to the true price occurs, players cannot know the true state $\bm{\sigma}$, and convergence seems to come from degeneracy of the set of $\bm{\sigma}$ with the same clearing price in $g^{-1}(1)$ or $g^{-1}(-1)$.
For securities which are not separable or totally symmetric, situation will be more complicated.
Finding general priors for non-separable or non-totally-symmetric securities and elucidating the relation between geometry of $g^{-1}(1)$ and $g^{-1}(-1)$ and appropriate priors for convergence to the true price is an important future problem.

In this paper, we only considered noiseless situation, where each process of the time evolution is accurate and precise.
However, this assumption is not realistic.
If noise exists, convergence of the iterative process of this model to the true state may not occur, by convergence to wrong states.
There are various origins of noises in game-theoretic situations \cite{ANT2018}.
In our setting, significant noises may come from rounding error in calculation of clearing price, effect of irrational players, and incompleteness of information of players.
Related to the last point, authors of Ref. \cite{CMC2006} investigated the situation where the state of the world cannot be fully determined even if information of all players is pooled together.
They found that convergence property of the distributed information market model becomes worse in specific examples when such aggregate uncertainty exists.
Investigating whether previous results and our result can be extended to noisy situations or not is another important future problem.

Related to the above remark, considering learning process in Eq. (\ref{eq:dynamics_Pex}) would be interesting.
Because players and external observer need to compute $c_t \left( \bm{\sigma} \right)$ for all $\bm{\sigma} \in \left\{ 1, -1 \right\}^N$ and rule out $\bm{\sigma}$ that is inconsistent with the actual clearing price, it needs much computational costs.
It is more realistic that players gradually learn $\bm{\sigma}$ by calculation with low computational costs (that is, players are bounded-rational).
Research in this direction is needed.

Furthermore, we are also interested in common knowledge equilibrium of other model such as market scoring rule \cite{Han2003,Han2007,CDSRPHFG2010}.
Market scoring rule is another model of information market (or prediction market), and it is myopically incentive compatible.
In addition, when logarithmic market scoring rule is adopted, analysis in terms of information theory is possible \cite{Han2007,CDSRPHFG2010}.
We will perform information theoretical analysis of the common knowledge equilibrium in future.

\section*{Acknowledgement}
This study was supported by JSPS KAKENHI Grant Numbers JP19K21542 and JP20K19884.


\section*{References}

\bibliography{DIM}

\end{document}